\newtheorem{theorem}{Theorem}[section]
\newtheorem{lemma}[theorem]{Lemma}
\theoremstyle{definition}
\theoremstyle{plain}
\title{An Approximate Pareto Set for Minimizing the Maximum Lateness and Makespan on Parallel Machines}
\author{Gais Alhadi$^{1}$, Imed Kacem$^{2}$, Pierre Laroche$^{3}$, and Izzeldin M. Osman$^{4}$
\thanks{$^{1}$Gais Alhadi is a member of Faculty of Mathematical and Computer Sciences, University of Gezira, Wad-Madani, Sudan
        {\tt\small gais.alhadi@uofg.edu.sd}}%
\thanks{Imed Kacem$^{2}$ and Pierre Laroche$^{3}$ are members of the LCOMS Laboratory, University of Lorraine, F-57045 Metz, France 
        {\tt\small imed.kacem@univ-lorraine.fr}, {\tt\small pierre.laroche@univ-lorraine.fr}}%
\thanks{Izzeldin M. Osman$^{4}$ is from Sudan University of Science and Technology, Khartoum, Sudan 
		{\tt\small izzeldin@acm.org}}%
}
\begin{document}
\maketitle


\begin{abstract}
We consider the two-parallel machines scheduling problem, with the aim of minimizing the maximum lateness and the makespan. Formally, the problem is defined as follows. We have to schedule a set $J$ of $n$ jobs on two identical machines. Each job $i\in J$ has a processing time $p_i$ and a delivery time $q_i$. Each machine can only perform one job at a given time. The machines are available at time $t=0$ and each of them can process at most one job at a given time. The problem is to find a sequence of jobs, with the objective of minimizing the maximum lateness $L_{max}$ and the makespan $C_{max}$. With no loss of generality, we consider that all data are integers and that jobs are indexed in non-increasing order of their delivery times: $q_1\geq q_2\geq\ldots\geq q_n$. This paper proposes an exact algorithm (based on a dynamic programming) to generate the complete Pareto Frontier in a pseudo-polynomial time. Then, we present an FPTAS (Fully Polynomial Time Approximation Scheme) to generate an approximate Pareto Frontier, based on the conversion of the dynamic programming. The proposed FPTAS is strongly polynomial. Some numerical experiments are provided in order to compare the two proposed approaches.

\end{abstract}



\section{Introduction}

We consider the two-parallel machines scheduling problem, with the aim of minimizing the maximum lateness and makespan. Formally, the problem is defined as follows. We have to schedule a set $J$ of $n$ jobs on two identical machines. Each job $i\in J$ has a processing time $p_i$ and a delivery time $q_i$. The machines are available at time t=0 and each of them can process at most one job at a time. The problem is to find a sequence of jobs, with the objective of minimizing the maximum lateness $L_{max}$ and the makespan $C_{max}$. With no loss of generality, we consider that all data are integers and that jobs are indexed in non-increasing order of their delivery times $q_1\geq q_2\geq\ldots\geq q_n$.\\
For self-consistency, we recall some necessary definitions related to the approximation area. An algorithm $A$ is called a $\rho-$\textit{approximation algorithm} for a given problem, if for any instance $I$ of that problem the algorithm $A$ yields, within a polynomial time, a feasible solution with an objective value $A(I)$ such that: $|A(I)-  OPT(I)|\leq \epsilon.OPT(I)$, where $OPT(I)$ is the optimal value of $I$ and $\rho$ is the performance guarantee or the worst-case ratio of the \textit{approximation algorithm} $A$. It can be a real number greater or equal to $1$ for the minimization problems $\rho =1+ \epsilon$ (that it leads to inequality $A(I)\leq (1+ \epsilon)OPT(I)$), or it can be real number from the interval $[0,1]$ for the maximization problems $\rho =1-\epsilon$ (that it leads to inequality $A(I)\geq(1- \epsilon)OPT (I)$). The Pareto-optimal solutions are the solutions that are not dominated by other solutions. Thus, we can consider that the solution is Pareto-optimal if there does not exist another solution that is simultaneously the best for all the objectives. Noteworthy, Pareto-optimal solutions represent a range of reasonable optimal solutions for all possible functions based on the different objectives. A schedule is called Pareto-optimal if it is not possible to decrease the value of one objective without increasing the value of the other.\\
It is noteworthy that during the last decade the multi-objective scheduling problems have attracted numerous researchers from all the world and have been widely studied in the literature. 
For the scheduling problems on serial-batch machine, Geng et al.\cite{Geng_Yuan} studied scheduling problems with or without precedence relations, where the objective is to minimize makespan and maximum cost. They have provided highly efficient polynomial-time algorithms to generate all Pareto optimal points. An approximate Pareto set of minimal size that approximates within an accuracy $\epsilon$ for multi-objective optimization problems have been studied by Bazgan et al.\cite{Bazgan_et_al}. They proposed a 3-approximation algorithm for two objectives and also proposed a study of the greedy algorithm performance for a three-objective case when the points are given explicitly in the input. They showed that the three-objective case is NP-hard. Chen and Zou \cite{Chen_Zou} proposed a runtime analysis of a $(\mu+1)$ multi-objective evolutionary algorithm for three multi-objective optimization problems with unknown attributes. They showed that when the size of the population is less than the total number of Pareto-vector, the $(\mu+1)$ multi-objective evolutionary algorithm cannot obtain the expected polynomial runtime for the exact discrete multi-objective optimization problems. Thus, we must determine the size of the population equal to the total number of leading ones, trailing zeros. Furthermore, the expected polynomial runtime for the exponential discrete multi-objective optimization problem can be obtained by the ratio of $n/2$ to $\mu-1$ over an appropriate period of time. They also showed that the $(\mu+1)$ multi-objective evolutionary algorithm can be solved efficiently in polynomial runtime by obtaining an $\epsilon-$ adaptive Pareto front. Florios and Mavrotas \cite{Florios_Mavrotas} used AUGMECON2, a multi-objective mathematical programming method (which is suitable for general multi-objective integer programming problems), to produce all the Pareto-optimal solutions for multi-objective traveling salesman and set covering problems. They showed that the performance of the algorithm is slightly better than it already exists. Moreover, they showed that their results can be helpful for other multi-objective mathematical programming methods or even multi-objective meta-heuristics. In \cite{Sabouni_Jolai}, Sabouni and Jolai proposed an optimal method for the problem of scheduling jobs on a single batch processing machine to minimize the makespan and the maximum lateness. They showed that the proposed method is optimal when the set with maximum lateness objective has the same processing times. They also proposed an optimal method for the group that has the maximum lateness objective and the same processing times. Geng et al.\cite{Geng_Yuan_Pareto} considered the scheduling problem on an unbounded p-batch machine with family jobs to find all Pareto-optimal points for minimizing makespan and maximum lateness. They presented a dynamic programming algorithm to solve the studied problem. He et al.\cite{He_Lin_lin_Bounded} showed that the Pareto optimization scheduling problem on a single bounded serial-batching machine to minimize makespan and maximum lateness is solvable in $O(n^6)$. They also presented an $O(n^3)$- time algorithm to find all Pareto optimal solutions where the processing times and deadlines are agreeable. 
For the bi-criteria scheduling problem, He et al.\cite{He_Lin_Tian} showed that the problem of minimizing maximum cost and makespan is solvable in $O(n^5)$ time. The authors presented a polynomial-time algorithm in order to find all Pareto optimal solutions. Also, He et al.\cite{He_Wand_Lin} showed that the bi-criteria batching problem of minimizing maximum cost and makespan is solvable in $O(n^3)$ time. The bi-criteria scheduling problem on a parallel-batching machine to minimize maximum lateness and makespan have been considered in \cite{He_et_al}. The authors presented a polynomial-time algorithm in order to find all Pareto optimal solutions. Allahverdi and Aldowaisan \cite{Allahverdi_Aldowaisan} studied the no-wait flow-shop scheduling problem with bi-criteria of makespan or maximum lateness. They also proposed a dominance relation and a branch-and-bound algorithm and showed that these algorithms are quite efficient.

The remainder of this paper is organized as follows. In Section 2, we describe the proposed dynamic programming (DP) algorithm. Section 3, provides the description and the analysis of the FPTAS. In Section 4, we present a practical example for DP and FPTAS. Finally, Section 5 concludes the paper.
\section{Dynamic Programming Algorithm}
\label{sec:DP}
The following dynamic programming algorithm $A$, can be applied to solve exactly this problem. This algorithm $A$ generates iteratively some sets of states. At every iteration $i$, a set $\chi_i$ composed of states is generated $(0 \leq i\leq n)$. Each state $[k,L_{max},C_{max}]$ in $\chi_i$ can be associated to a feasible partial schedule for the first $i$ jobs. Let variable $k\in \{0,1\}$ denote the most loaded machine, $L_{max}$ denote the maximum lateness and $C_{max}$ denote the maximum completion time of the corresponding schedule. The dynamic programming algorithm can be described as follows.

\section*{Algorithm $A$}
\begin{enumerate}
\item 	Set $\chi_1= \{[1,p_1+q_1,p_1]\}$. 
\item 	For $i\in \{2,3,...,n\}$,
\begin{enumerate}
\item $\chi_i= \phi$.
\item For every state $[k,L_{max},C_{max}]$ in $\chi_{i-1}:$
	\begin{itemize}
	\item (schedule job $i$ on machine $k$)
    \item []add $[k$,$max\{L_{max}$,$C_{max} + p_i+q_i\}$,$C_{max} +p_i]$ to $\chi_i$
    \item (schedule job $i$ on machine $1 - k$)
    \item[]if $(C_{max} \geq \sum_{j=1}^{i}p_j - C_{max} )$
		\begin{itemize}
		\item add $[k$,$max\{L_{max}$,$\sum_{j=1}^{i}p_j - C_{max} +q_i\}$,$C_{max}]$ to $\chi_i$
		\\else
		\item add $[1-k$,$max\{L_{max}$,$\sum_{j=1}^{i}p_j - C_{max} +q_i\}$,$\sum_{j=1}^{i}p_j - C_{max}]$ to $\chi_i$ 
		\end{itemize}
	\end{itemize}
\item For every $k$, for every $C_{max}$: keep only one state with the smallest possible $L_{max}.$
\item Remove $\chi_{i-1}$.
\end{enumerate}
\item Return the Pareto front of $\chi_n$, by only keeping non-dominated states.
\end{enumerate}
\textbf{\textit{Remark}}: To destroy the symmetry, we start by $\chi_1= \{[1,p_1+q_1,p_1]\}$ (i.e., we perform job 1 on the first machine). 

\section{Approximate Pareto Frontier}
\label{sec:FPTAS}
The main idea of the Approximate Pareto Frontier is to remove a special part of the states generated by the dynamic programming algorithm $A$. Therefore, the modified algorithm ${A}'$ described in Lemma \ref{LemmaFPTAS} produces an approximation solution instead of the optimal solution.\\
Given an arbitrary $\epsilon >0$, we define the following parameters:
		\[\delta_1 = \frac{\epsilon P/2}{n},\]
        and
        \[\delta_2 = \frac{\epsilon(P+q_{max})/3}{n}.\]
where $q_{max}$ is the maximum delivery time and $P$ is the total sum of processing times. \\\\
Let $L_{max}^*$ and $C_{max}^*$ be the optimal solutions for our two objectives. 
Let $LMAX$ and $CMAX$ be the upper bounds for the two considered criteria (scheduling all the jobs on the same machine), such that,
		\[0\leq LMAX=P+q_{max} \leq 3L_{max}^* \] 
        \[0\leq CMAX=P \leq 2C_{max}^* \]
We divide the intervals $[0,CMAX]$ and $[0,LMAX]$ into equal sub-intervals respectively of lengths $\delta_1$ and $\delta_2$. Then, an FPTAS is defined by following the same procedure as in the dynamic programming, except the fact that it will keep only one representative state for every couple of the defined subintervals produced from $[0,CMAX]$ and $[0,LMAX]$. Thus, our
FPTAS will generate approximate sets $\chi_i^\#$of states instead of $\chi_i$. The following lemma shows the closeness of the result generated by the FPTAS compared to the dynamic programming.

\begin{lemma}
\label{LemmaFPTAS}
\textit{For every state $[k,L_{max},C_{max}]\in \chi_i$ there exists at least one approximate state  $[m,L_{max}^\#,C_{max}^\#] \in \chi_i^\#$ such that:}
		\[L_{max}^\#≤L_{max}+i.\max\{\delta_1,\delta_2\}, \]
        and
			\[C_{max}-i.\delta_1 \leq C_{max}^\#≤C_{max}+i.\delta_1.\]
\end{lemma}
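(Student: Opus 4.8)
The plan is to prove the statement by induction on the iteration index $i$, mirroring the fact that $\chi_i^\#$ is built from $\chi_{i-1}^\#$ by exactly the transitions of Algorithm $A$ followed by one box-reduction step. For the base case $i=1$, both $\chi_1$ and $\chi_1^\#$ contain only the single state $[1,p_1+q_1,p_1]$, so the approximate state coincides with the exact one and the two inequalities hold trivially, since their right-hand sides only add nonnegative slack. The substance lies in the inductive step.

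Assume the claim for $i-1$ and fix any $[k,L_{max},C_{max}]\in\chi_i$. By construction it is obtained from some parent $[k',L_{max}',C_{max}']\in\chi_{i-1}$ through one of the two scheduling moves for job $i$. The induction hypothesis supplies an approximate parent $[m',L_{max}'^\#,C_{max}'^\#]\in\chi_{i-1}^\#$ with $L_{max}'^\#\le L_{max}'+(i-1)\max\{\delta_1,\delta_2\}$ and $C_{max}'-(i-1)\delta_1\le C_{max}'^\#\le C_{max}'+(i-1)\delta_1$. First I would apply the \emph{same} move to this approximate parent to obtain a candidate state $[\,\cdot\,,\tilde L,\tilde C]$ that the FPTAS generates before reduction, then bound $\tilde L$ and $\tilde C$ against $L_{max}$ and $C_{max}$.

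The observation that keeps the error additive rather than letting it blow up is that both updated quantities are $1$-Lipschitz and monotone in the parent's completion time. For the move placing job $i$ on the loaded machine, $C_{max}=C_{max}'+p_i$ and $L_{max}=\max\{L_{max}',C_{max}'+p_i+q_i\}$; substituting the perturbed parent shifts the completion time by at most $(i-1)\delta_1$ and the lateness by at most $(i-1)\max\{\delta_1,\delta_2\}$, using that $\max\{a,\cdot\}$ is $1$-Lipschitz. For the move placing job $i$ on the other machine, I would first note that, whatever branch the test $C_{max}'\gtrless\sum_{j\le i}p_j-C_{max}'$ selects, the resulting completion time equals $\max\{C_{max}',\sum_{j\le i}p_j-C_{max}'\}$ and the resulting lateness equals $\max\{L_{max}',\sum_{j\le i}p_j-C_{max}'+q_i\}$. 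Writing $f(x)=\max\{x,\sum_{j\le i}p_j-x\}$, which is $1$-Lipschitz, gives $|\tilde C-C_{max}|\le(i-1)\delta_1$ directly, and the same $\max$-Lipschitz estimate gives $\tilde L\le L_{max}+(i-1)\max\{\delta_1,\delta_2\}$.

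Finally I would add the single box-reduction error. The candidate $[\,\cdot\,,\tilde L,\tilde C]$ falls into one $(\delta_1,\delta_2)$-box, and its kept representative $[m,L_{max}^\#,C_{max}^\#]\in\chi_i^\#$ lies in the same box, so $|C_{max}^\#-\tilde C|\le\delta_1$ and $|L_{max}^\#-\tilde L|\le\delta_2$. Combining with the previous estimates yields $C_{max}-i\delta_1\le C_{max}^\#\le C_{max}+i\delta_1$ and $L_{max}^\#\le L_{max}+i\max\{\delta_1,\delta_2\}$, closing the induction. I expect the main obstacle to be this second scheduling move: the perturbed completion time may send the approximate state down a \emph{different} branch of the test (and even onto a different machine index than $k$), so the argument must not depend on the branch actually taken. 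Recasting both updates as the branch-free $1$-Lipschitz maxima above is exactly what removes this dependence, and since the lemma only asserts the existence of \emph{some} index $m$, there is no need to match $k$.
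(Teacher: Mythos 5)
Your proof is correct, and its skeleton is the same as the paper's: induction on $i$, apply job $i$'s scheduling move to the approximate parent supplied by the induction hypothesis, and then absorb one extra box-reduction error ($\delta_1$ in the makespan coordinate, $\delta_2 \le \max\{\delta_1,\delta_2\}$ in the lateness coordinate) to close the induction. Where you genuinely diverge is in handling the move that places job $i$ on the other machine. The paper splits this into two cases according to the branch taken by the exact parent (its cases 2 and 3), and then each into two sub-cases according to the branch taken by the approximate parent (2.1, 2.2, 3.1, 3.2), verifying the three inequalities separately in each; for instance, its sub-case 2.2 needs the exact parent's branch condition $\sum_{j\le i}p_j - C'_{max} \le C'_{max}$ as an explicit ingredient. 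Your reformulation --- in either branch the new makespan is $f(C') = \max\bigl\{C',\ \sum_{j\le i}p_j - C'\bigr\}$ and the new lateness is $\max\bigl\{L',\ \sum_{j\le i}p_j - C' + q_i\bigr\}$, with $f$ being $1$-Lipschitz --- collapses those four sub-cases into a single two-line estimate and makes it transparent that a branch mismatch between exact and approximate parents (including a flipped machine index) is harmless; this is precisely the point where a naive ``apply the same formula'' argument would break, and you flag it correctly. What the paper's explicit enumeration buys is a direct, line-by-line verification that the candidate state is literally one of the states algorithm $A'$ inserts into $\chi_i^\#$ before reduction; your argument reaches the same conclusion more economically, relying on the (correct, and worth stating explicitly) observation that whichever branch the FPTAS takes on the approximate parent, the state it creates is exactly $\bigl[\,\cdot\,,\ \max\{L'^{\#},\ \sum_{j\le i}p_j - C'^{\#} + q_i\},\ f(C'^{\#})\bigr]$.
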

\begin{proof}
By induction on $i$.\\
First, for $i = 0$ we have $\chi_i^\# = \chi_1$. Therefore, the statement is trivial.  Now, assume that the lemma holds true up to level $i-1$. Consider an arbitrary state $[k,L_{max},C_{max}] \in \chi_i$.
Algorithm $A$ introduces this state into $\chi_i$ when job $i$ is added to some feasible state for the first $i-1$ jobs. Let $[k^{'},L_{max}^{'},C_{max}^{'}]$ be the above feasible state. Three cases can be distinguished:

\begin{enumerate}
	\item  $[k,L_{max},C_{max}]= [k^{'}$, $\max\{L_{max}^{'}$,$C_{max}^{'} + p_i+q_i\}$, $C_{max}^{'} +p_i]$
	\item  $[k,L_{max},C_{max}]= [k^{'}$, $\max\{L_{max}^{'}$,$\sum_{j=1}^{i}p_j - C_{max}^{'} +q_i\}$, $C_{max}^{'}]$ 
	\item  $[k,L_{max},C_{max}]= [1-k^{'}$, $\max\{L_{max}^{'}$,$\sum_{j=1}^{i}p_j - C_{max}^{'} +q_i\}$,$\sum_{j=1}^{i}p_j-C_{max}^{'}]$ 
\end{enumerate}
We will prove the statement for level $i$ in the three cases.


\begin{itemize}
\item \textbf{$1^{st}$ Case: } $[k,L_{max},C_{max}]=[k^{'}$, $\max\{L_{max}^{'}$,$C_{max}^{'} + p_i+q_i\}$, $C_{max}^{'} +p_i]$ \end{itemize}
Since $[k^{'}$,$L_{max}^{'}$,$C_{max}^{'}]\in \chi_{i-1}$, there exists 
$[k^{'\#},L_{max}^{'\#},C_{max}^{'\#} ]\in \chi_{i-1}^\#$, such that:\\
\(L_{max}^{'\#} \leq L_{max}^{'}+(i-1)\max\{\delta_1,\delta_2\}\)
and \( C_{max}^{'}- (i-1) \delta_1 \leq C_{max}^{'\#} \leq C_{max}^{'} +(i-1) \delta_1 .\) \\
Consequently, the state $[k^{'\#},\max\{L_{max}^{'\#},C_{max}^{'\#}+p_i+q_i\},C_{max}^{'\#}+p_i]$ is created by algorithm $A^{'}$at iteration $i$. However, it may be removed when reducing the state subset. Let $[\alpha,\lambda,\mu]$ be the state in $\chi_i^\#$ that is in the same box as the sate $[k^{'\#},\max\{L_{max}^{'\#},C_{max}^{'\#}+p_i+q_i\},C_{max}^{'\#}+p_i]$. Hence, we have:

\begin{eqnarray}
\lambda & \leq & \max\{L_{max}^{'\#},C_{max}^{'\#}+p_i+q_i \} + \delta_2 \nonumber \\
& \leq & \max\{L_{max}^{'},C_{max}^{'}+p_i+q_i\}\nonumber \\
& &+(i-1).\max\{\delta_1,\delta_2\} + \delta_2 \nonumber \\
& \leq & L_{max} + i.\max\{\delta_1,\delta_2\} \label{eq:case11} 
\end{eqnarray}In addition,
\begin{eqnarray}
\mu & \leq & C_{max}^{'\#}+p_i + \delta_1 \nonumber \\
& \leq & C_{max}^{'}+(i-1)\delta_1 + p_i + \delta_1 = C_{max} + i\delta_1. \nonumber \\  \label{eq:case12}
\end{eqnarray}
and,
\begin{eqnarray}
\mu  &\geq & C_{max}^{'\#}+p_i - \delta_1 \geq C_{max}^{'}-(i-1)\delta_1 + p_i - \delta_1  \nonumber \\
&\geq & C_{max} - i\delta_1.  \nonumber \\
\label{eq:case13}
\end{eqnarray}Consequently, $[\alpha,\lambda,\mu]$ is an approximate state verifying the two conditions.


\begin{itemize}
\item \textbf{$2^{nd}$ Case: } $[k,L_{max},C_{max}]= [k^{'}$,$max\{L_{max}^{'}$,$\sum_{j=1}^{i}p_j - C_{max}^{'} +q_i\}$,$C_{max}^{'}]$ 
\end{itemize}
Since $[k^{'}$,$L_{max}^{'}$,$C_{max}^{'}]\in \chi_{i-1}$, there exists
$[k^{'\#},L_{max}^{'\#},C_{max}^{'\#} ]\in \chi_{i-1}^\#$, such that:\\
\(L_{max}^{'\#} \leq L_{max}^{'}+(i-1)max\{\delta_1,\delta_2\}\)
and \(C_{max}^{'} -(i-1) \delta_1 \leq C_{max}^{'\#} \leq C_{max}^{'} +(i-1) \delta_1 .\)\\\\
Consequently, two sub-cases can occur:
\begin{itemize}
\item Sub-case 2.1:$\sum_{j=1}^{i}p_j - C_{max}^{'\#} \leq  C_{max}^{'\#}$
\end{itemize}
Here, the state 
 $[k^{'\#}$, $\max\{L_{max}^{'\#}$,$\sum_{j=1}^{i}p_j - C_{max}^{'\#} +q_i\}$,$C_{max}^{'\#}]$ is created by algorithm $A^{'}$at iteration $i$. However, it may be removed when reducing the state subset. Let $[\alpha,\lambda,\mu]$ be the state in $\chi_i^\#$ that is in the same box as $[k^{'\#}$, $\max\{L_{max}^{'\#}$,$\sum_{j=1}^{i}p_j - C_{max}^{'\#} +q_i\}$, $C_{max}^{'\#}]$. Hence, we have:

\begin{eqnarray}
\lambda & \leq & \max\{ L_{max}^{'\#},\sum_{j=1}^{i}p_j - C_{max}^{'\#} +q_i\} + \delta_2 \nonumber \\
& \leq & \max\{L_{max}^{'}+(i-1)\max\{\delta_1,\delta_2\},\sum_{j=1}^{i}p_j \nonumber \\
&& - (C_{max}^{'} -(i-1)\delta_1)+q_i\} + \delta_2 \nonumber \\
& \leq & \max\{L_{max}^{'},\sum_{j=1}^{i}p_j - C_{max}^{'}+q_i\} \nonumber \\
&& +(i-1)\max\{\delta_1,\delta_2\} + \delta_2 \nonumber \\
& \leq & L_{max} +(i-1)\max\{\delta_1,\delta_2\} + \delta_2 \nonumber \\
& < & L_{max}+ i.\max\{\delta_1,\delta_2\} \label{eq:case211} 
\end{eqnarray}

Moreover, 
\begin{eqnarray}
\mu \leq C_{max}^{'\#} + \delta_1 \leq C_{max}^{'}+(i-1)\delta_1 + \delta_1 = C_{max}+i\delta_1 .  \label{eq:case212}
\end{eqnarray}
And,
\begin{eqnarray}
\mu \geq C_{max}^{'\#} - \delta_1 \geq C_{max}^{'} - (i-1)\delta_1 - \delta_1 =  C_{max} - i\delta_1 .  \label{eq:case213}
\end{eqnarray}
Consequently, $[\alpha,\lambda,\mu]$ is an approximate state verifying the two conditions.
\begin{itemize}
\item Sub-case 2.2:$\sum_{j=1}^{i}p_j - C_{max}^{'\#} > C_{max}^{'\#}$
\end{itemize}
Here, the state 
 $[1-k^{'\#}$, $\max\{L_{max}^{'\#}$,$\sum_{j=1}^{i}p_j - C_{max}^{'\#} +q_i\}$,$\sum_{j=1}^{i}p_j - C_{max}^{'\#}]$ is created by algorithm $A^{'}$at iteration $i$. However, it may be removed when reducing the state subset. Let $[\alpha,\lambda,\mu]$ be the state in $\chi_i^\#$ that is in the same box as $[1-k^{'\#}$, $\max\{L_{max}^{'\#}$,$\sum_{j=1}^{i}p_j - C_{max}^{'\#} +q_i\}$,$\sum_{j=1}^{i}p_j - C_{max}^{'\#}]$. Hence, we have:

\begin{eqnarray}
\lambda & \leq & \max\{ L_{max}^{'\#},\sum_{j=1}^{i}p_j - C_{max}^{'\#} +q_i\} + \delta_2 \nonumber \\
& \leq & \max\{L_{max}^{'}+(i-1)\max\{\delta_1,\delta_2\},\sum_{j=1}^{i}p_j \nonumber\\
&& - (C_{max}^{'} +(i-1)\delta_1)+q_i\} + \delta_2 \nonumber \\
& \leq & \max\{L_{max}^{'},\sum_{j=1}^{i}p_j - C_{max}^{'}+q_i\} \nonumber \\
&&+(i-1)\max\{\delta_1,\delta_2\} + \delta_2 \nonumber \\
& \leq & L_{max} +(i-1)\max\{\delta_1,\delta_2\} + \delta_2 \nonumber \\
& < & L_{max}+ i.\max\{\delta_1,\delta_2\} \label{eq:case221} 
\end{eqnarray}

Moreover, 
\begin{eqnarray}
\mu &\leq & \sum_{j=1}^{i}p_j - C_{max}^{'\#} + \delta_1 \nonumber \\
\end{eqnarray}Since $C_{max}^{'\#} \geq C_{max}^{'} - (i-1) \delta_1$, then the following relation holds
\begin{eqnarray}
\mu &\leq & \sum_{j=1}^{i}p_j - C_{max}^{'}+(i-1)\delta_1 + \delta_1 \leq C_{max}+i\delta_1  \nonumber \\
  \label{eq:case222}
\end{eqnarray}(since $\sum_{j=1}^{i}p_j - C_{max}^{'} \leq C_{max}^{'}$).\\
And,
\begin{eqnarray}
\mu &\geq& \sum_{j=1}^{i}p_j - C_{max}^{'\#} - \delta_1 \geq C_{max}^{'\#} - \delta_1  \nonumber \\ 
&\geq& C_{max} - (i-1)\delta_1 - \delta_1  =  C_{max} - i \delta_1 .  \label{eq:case223}
\end{eqnarray}
Thus, $[\alpha,\lambda,\mu]$ verifies the necessary conditions.


\begin{itemize}
\item \textbf{$3^{rd}$ Case: } $[k,L_{max},C_{max}]= [1-k^{'}$, $\max\{L_{max}^{'}$,$\sum_{j=1}^{i}p_j - C_{max}^{'} +q_i\}$,$\sum_{j=1}^{i}p_j-C_{max}^{'}]$ 
\end{itemize}
Since $[k^{'}$,$L_{max}^{'}$,$C_{max}^{'}]\in \chi_{i-1}$, there exists
$[k^{'\#},L_{max}^{'\#},C_{max}^{'\#} ]\in \chi_{i-1}^\#$, such that:\\
\(L_{max}^{'\#} \leq L_{max}^{'}+(i-1)\max\{\delta_1,\delta_2\}\)
and \(C_{max}^{'} -(i-1) \delta_1 \leq C_{max}^{'\#} \leq C_{max}^{'} +(i-1) \delta_1 .\)\\
Consequently, two sub-cases can occur:\\
\begin{itemize}
\item Sub-case 3.1:$\sum_{j=1}^{i}p_j - C_{max}^{'\#} \geq  C_{max}^{'\#}$
\end{itemize}
Here, the state 
$[1-k^{'\#}$,$\max\{L_{max}^{'\#}$,$\sum_{j=1}^{i}p_j - C_{max}^{'\#} +q_i\}$,$\sum_{j=1}^{i}p_j - C_{max}^{'\#}]$ is created by algorithm $A^{'}$at iteration $i$. However, it may be removed when reducing the state subset. Let $[\alpha,\lambda,\mu]$ be the state in $\chi_i^\#$ that is in the same box as $[1-k^{'\#}$,$\max\{L_{max}^{'\#}$,$\sum_{j=1}^{i}p_j - C_{max}^{'\#} +q_i\}$,$\sum_{j=1}^{i}p_j - C_{max}^{'\#}]$. Hence, we have:
\begin{eqnarray}
\lambda & \leq & \max\{ L_{max}^{'\#},\sum_{j=1}^{i}p_j - C_{max}^{'\#} +q_i\} + \delta_2 \nonumber \\
& \leq & \max\{L_{max}^{'}+(i-1)max\{\delta_1,\delta_2\},\sum_{j=1}^{i}p_j \nonumber\\
&& - (C_{max}^{'} -(i-1)\delta_1)+q_i\} + \delta_2 \nonumber \\
& \leq & \max\{L_{max}^{'},\sum_{j=1}^{i}p_j \nonumber\\ 
&& - C_{max}^{'}+q_i\}+(i-1)\max\{\delta_1,\delta_2\} + \delta_2 \nonumber \\
& \leq & L_{max} +(i-1)\max\{\delta_1,\delta_2\} + \delta_2 \nonumber \\
& \leq & L_{max}+ i\max\{\delta_1,\delta_2\} \label{eq:case311} 
\end{eqnarray}
and 
\begin{eqnarray}
\mu &\leq & \sum_{j=1}^{i}p_j - C_{max}^{'\#} + \delta_1 \nonumber \\
&&\leq \sum_{j=1}^{i}p_j - (C_{max}^{'} - (i-1)\delta_1) + \delta_1 \nonumber \\
&&\leq C_{max}+i\delta_1 .  \label{eq:case312}
\end{eqnarray}In the other hand, we have
\begin{eqnarray}
\mu &\geq & \sum_{j=1}^{i}p_j - C_{max}^{'\#} - \delta_1 \nonumber\\
&\geq& \sum_{j=1}^{i}p_j - (C_{max}^{'} + (i-1)\delta_1) - \delta_1 \nonumber \\
&\geq& C_{max}-i\delta_1 .  \label{eq:case313}
\end{eqnarray}Thus, $[\alpha,\lambda,\mu]$ fulfills the conditions.
\begin{itemize}
\item Sub-case 3.2:$\sum_{j=1}^{i}p_j - C_{max}^{'\#} < C_{max}^{'\#}$
\end{itemize}
 Here, the state $[k^{'\#}$, $\max\{L_{max}^{'\#}$,$\sum_{j=1}^{i}p_j - C_{max}^{'\#} +q_i\}$,$C_{max}^{'\#}]$ is created by algorithm $A^{'}$at iteration $i$. However, it may be removed when reducing the state subset. Let $[\alpha,\lambda,\mu]$ be the state in $\chi_i^\#$ that is in the same box as $[k^{'\#}$, $\max\{L_{max}^{'\#}$,$\sum_{j=1}^{i}p_j - C_{max}^{'\#} +q_i\}$,$C_{max}^{'\#}]$. Hence, we have:

\begin{eqnarray}
\lambda & \leq & \max\{ L_{max}^{'\#},\sum_{j=1}^{i}p_j - C_{max}^{'\#} +q_i\} \nonumber \\
&&+ \delta_2 \nonumber \\
& \leq & \max\{L_{max}^{'}+(i-1)\max\{\delta_1,\delta_2\},\sum_{j=1}^{i}p_j \nonumber\\
&&- (C_{max}^{'} -(i-1)\delta_1)+q_i\} + \delta_2 \nonumber \\
& \leq & \max\{L_{max}^{'},\sum_{j=1}^{i}p_j - C_{max}^{'}+q_i\} \nonumber \\
&&+(i-1)\max\{\delta_1,\delta_2\} + \delta_2 \nonumber \\
& \leq & L_{max} +(i-1)\max\{\delta_1,\delta_2\} + \delta_2 \nonumber \\
& \leq & L_{max}+ i\max\{\delta_1,\delta_2\} \label{eq:case321} 
\end{eqnarray}

and
\begin{eqnarray}
\mu &\leq& C_{max}^{'\#} + \delta_1 \leq C_{max}^{'}+(i-1)\delta_1 + \delta_1 \nonumber \\
&\leq& C_{max}+i\delta_1. \nonumber \\
\label{eq:case322}
\end{eqnarray}In the other hand, we have
\begin{eqnarray}
\mu &\geq& C_{max}^{'\#} - \delta_1 \geq \sum_{j=1}^{i}p_j - C_{max}^{'\#} - \delta_1 \nonumber \\ 
&\geq& \sum_{j=1}^{i}p_j - C_{max}^{'} - i\delta_1  =  C_{max} - i\delta_1 .  \label{eq:case323}
\end{eqnarray}
Therefore, $[\alpha,\lambda,\mu]$  fulfills the conditions.\\\\
In conclusion, the statement holds also for level $i$ in the third case, and this completes our inductive proof.
\end{proof}
Based on the lemma, we deduce easily that for every non-dominated state $[k,L_{max},C_{max}]\in \chi_n$, it must remain a close state $[m,L_{max}^\#,C_{max}^\#] \in \chi_n^\#$ such that:
		\[L_{max}^\#≤L_{max}+n.\max\{\delta_1,\delta_2\} ≤ (1+\epsilon).L_{max}, \]
        and
			\[C_{max}^\#≤C_{max}+n.\delta_1 ≤(1+\epsilon).C_{max}.\]
Moreover, it is clear that the FPTAS runs polynomially in $n$ and $1/\epsilon$. The overall complexity of our FPTAS is $O(n^{3}/\epsilon^2)$.

\section{Results}
The following results have been obtained after testing the performance of the proposed algorithms. The code has been done in Java and the experiments were performed on an Intel(R) Core(TM)-i7 with 8GB RAM. We randomly generate five sets of instances, with different numbers of jobs and various processing and delivery times:
\begin{itemize}
\item number of jobs: from 5 to 25, 26 to 50, 51 to 75, 76 to 100 and 100 to 200  
\item processing times : from 1 to 20, 1 to 100 and 1 to 1000
\item delivery times : from 1 to 20, 1 to 100 and 1 to 1000
\end{itemize}
That gave us 135 instances in each set of instances. Finally, the FPTAS has been tested with two values of $\epsilon$: $0.3$ and $0.9$. To ensure the consistency of running times, each test has been run three times.

Figure \ref{fig:quality} presents a comparison of FPTAS and Dynamic Programming. The left part of this figure shows the average size of the Pareto Front ({\em i.e.} the number of solutions) found by the Dynamic Programming algorithm and our FPTAS with the two $\epsilon$ values we used. The sizes are given for our five sets of instances, from small instances (5-25 instances) to bigger ones (100-200 jobs). We can see that the number of solutions decreases as the number of jobs increases. With a lot of jobs, it is more likely to obtain very similar solutions, a lot of them being dominated by others. At the opposite, the number of jobs has no real influence on the Pareto front sizes found by our FPTAS algorithm, whatever the value of $\epsilon$.

On the right part of the same figure are given the average quality of the two objectives of our study: $L_{max}^{A'^{\epsilon }}/L^*_{max}$ and $C_{max}^{A'^{\epsilon }}/C^*_{max}$. We can see that the FPTAS algorithms are finding solutions closer to the optimal ones when the number of jobs is increasing. $C_{max}$ values are closer to the optimal than $L_{max}$ values, which is not a surprise, as $L_{max}$ depends on $C_{max}$. Worth to mention, our FPTAS with $\epsilon=0.3$ gives better results than with $\epsilon=0.9$, which is consistent with the theory.  

\begin{figure*}
   \begin{minipage}[c]{.49\linewidth}
      \includegraphics[height=4cm]{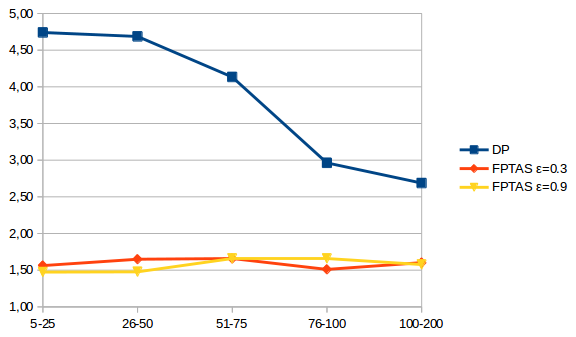}
      \legend{Size of Pareto Front}
   \end{minipage} \hfill
   \begin{minipage}[c]{.49\linewidth}
      \includegraphics[height=4cm]{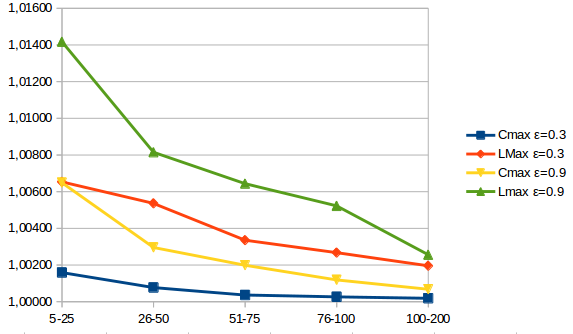}
      \legend{Quality of $C_{max}$ and $L_{max}$}
   \end{minipage} \hfill
   \caption{Quality of our FPTAS algorithm with $\epsilon=0.3$ and $\epsilon=0.9$}
   \label{fig:quality}
\end{figure*}

We have also studied the influence of processing and delivery times, in Table \ref{table:P} and \ref{table:Q}. These tables present average results for our benchmark, considering the 5 sets of instances. Results are presented for $\epsilon=0.3$; the analysis is the same with $\epsilon=0.9$. Table \ref{table:P} shows that instances composed of jobs with various processing times lead to optimal solutions with a bigger Pareto front, as seen in column 2. At the opposite, columns 3 to 5 show that our FPTAS algorithm is not really influenced by this parameter. Table \ref{table:Q} shows that delivery times ranges have more influence on the results : the size of the Pareto front of non-dominated solutions grows faster. Our FPTAS algorithm has the same behavior, and we can see that the results are more close to the optimal with smaller values of $q_i$.

\begin{table}[htbp]
\caption{Quality of FPTAS as a function of processing time ranges (for $\epsilon=0.3$)}
\begin{center}
\begin{tabular}{|l|c|c|c|c|}
\hline
\multirow{2}*{$p_i$ range} & DP size of   & FPTAS size of & \multirow{2}*{$C_{max}^{A'^{\epsilon }}/C^*_{max}$} & \multirow{2}*{$L_{max}^{A'^{\epsilon }}/L^*_{max}$}\\ 
 & Pareto front & Pareto front  & &\\ \hline
1-20  & 2.26 & 1.23 & 1.0006  & 1.003\\ \hline
1-100 & 4.07 & 1.84 & 1.0008 & 1.007\\ \hline
1-500 & 5.65 & 1.73 & 1.0005 & 1.004\\ \hline
\end{tabular}
\label{table:P}
\end{center}
\end{table}

\begin{table}[htbp]
\caption{Quality of FPTAS as a function of delivery time ranges (for $\epsilon=0.3$)}
\begin{center}
\begin{tabular}{|l|c|c|c|c|}
\hline
\multirow{2}*{$q_i$ range} & DP size of   & FPTAS size of & \multirow{2}*{$C_{max}^{A'^{\epsilon }}/C^*_{max}$} & \multirow{2}*{$L_{max}^{A'^{\epsilon }}/L^*_{max}$}\\ 
 & Pareto front & Pareto front & & \\ \hline
1-20  & 2.02 & 1.25 & 1.0007  & 1.001\\ \hline
1-100 & 3.03 & 1.57 & 1.0007 & 1.002\\ \hline
1-500 & 6.93 & 1.99 & 1.0005 & 1.008\\ \hline
\end{tabular}
\label{table:Q}
\end{center}
\end{table}

Computing times are given in Tables \ref{table compj}, \ref{table compp} and \ref{table compq}. They compare our Dynamic Programming algorithm and our FPTAS, considering two $\epsilon$ values : 0.3 and 0.9. All values are in milliseconds. Table \ref{table compj} shows that all algorithms are slower when the number of states is growing. Table \ref{table compp} shows an interesting result: while the Dynamic Programming algorithm becomes slower when the processing times ranges are growing, the FPTAS has an opposite behavior. The FPTAS with $\epsilon=0.3$ is even slower than the exact algorithm for the smallest range of processing times. 
Table \ref{table compq} shows that the delivery times ranges have a smaller influence on the Dynamic Programming algorithm: computing times are growing, but slower than the delivery times ranges. The FPTAS computing times are also growing in function of the delivery times ranges. Note that tables \ref{table compp} and \ref{table compq} are based on mean values from all the set of instances.

\begin{table}[htbp]
\caption{Average computing times vs size of instances (ms)}
\begin{center}
\begin{tabular}{|l|c|c|c|}
\hline
\multirow{2}*{\#jobs} & \multirow{2}*{DP} & \multicolumn{2}{c|}{FPTAS}\\ \cline{3-4}
& & $\epsilon=0.3$&$\epsilon=0.9,$ \\ \hline
 5-25  & 67 & 0.9 &0.3  \\ \hline
26-50 & 1278 & 5.4 & 0.9  \\ \hline
51-75 & 7917 & 22 & 3.4  \\ \hline
76-100 & 24937 & 58 & 7.8 \\ \hline
100-200 & 164332 & 281 & 32 \\ \hline
\hline
\end{tabular}
\label{table compj}
\end{center}
\end{table}

\begin{table}[htbp]
\caption{Average computing times (ms) vs processing time ranges}
\begin{center}
\begin{tabular}{|l|c|c|c|}
\hline
\multirow{2}*{$p_i ranges$} & \multirow{2}*{DP} & \multicolumn{2}{c|}{FPTAS}\\ \cline{3-4}
& & $\epsilon=0.3$&$\epsilon=0.9,$ \\ \hline
 1-20  & 91 & 144 & 15 \\ \hline
1-100 & 2668 & 51 & 7  \\ \hline
1-500 & 116362 & 25 & 5 \\ \hline
\hline
\end{tabular}
\label{table compp}
\end{center}
\end{table}

\begin{table}[htbp]
\caption{Average computing times (ms) vs delivery time ranges}
\begin{center}
\begin{tabular}{|l|c|c|c|}
\hline
\multirow{2}*{$q_i ranges$} & \multirow{2}*{DP} & \multicolumn{2}{c|}{FPTAS}\\ \cline{3-4}
& & $\epsilon=0.3$&$\epsilon=0.9,$ \\ \hline
 1-20  & 31447 & 26 & 5 \\ \hline
1-100 & 40482 & 46 & 7  \\ \hline
1-500 & 47190 & 149 & 15 \\ \hline
\hline
\end{tabular}
\label{table compq}
\end{center}
\end{table}

\section{Conclusions and Perspectives}

The two-parallel machines scheduling problem has been considered to minimize the maximum lateness and the makespan. We have proposed an exact algorithm (based on dynamic algorithm) to generate the complete Pareto Frontier in a pseudo-polynomial time. Then, we present an FPTAS (Fully Polynomial Time Approximation Scheme) to generate an approximate Pareto Frontier, based on the conversion of the exact dynamic programming. For the proposed algorithms, we randomly generated several instances with different ranges, and, for each job $J_i$, its processing time $p_i$ and delivery time $q_i$ are sets to be integer numbers.The results of the experiments showed that the proposed algorithms for the considered problem are very efficient. It is clear that optimizing the maximum lateness $(L_{max})$ implies to minimize implicitly the makespan $(C_{max})$. Moreover, the values of $\epsilon$ and processing and delivery times play an important role in the results (i.e., big processing times, small delivery times and big $\epsilon$ make the FPTAS faster and vice versa).\\
In our future works, the study of the multiple-machine scheduling problems seems to be a challenging perspective in the extension of our work.

%
%
\bibliographystyle{splncs}

\end{document}